\newcommand{\tinyspace}{\mspace{1mu}}
\newcommand{\abs}[1]{\left\lvert\tinyspace #1 \tinyspace\right\rvert}
\newcommand{\norm}[1]{\left\lVert\tinyspace #1 \tinyspace\right\rVert}
\newcommand{\setft}[1]{\mathrm{#1}}
\newcommand{\density}[1]{\setft{D}\left(#1\right)}
\newcommand{\supp}{{\operatorname{supp}}}
\def\real{\mathbb{R}}
\def\I{\mathbb{1}}
\newenvironment{mylist}[1]{\begin{list}{}{
    \setlength{\leftmargin}{#1}
    \setlength{\rightmargin}{0mm}
    \setlength{\labelsep}{2mm}
    \setlength{\labelwidth}{8mm}
    \setlength{\itemsep}{0mm}}}
    {\end{list}}
\def\ot{\otimes}
\newcommand{\out}[2]{| #1\rangle\langle #2 |}
\newcommand{\Inner}[2]{\left\langle #1 , #2\right\rangle}
\newcommand{\defeq}{\stackrel{\smash{\textnormal{\tiny def}}}{=}}
\newcommand{\pa}[1]{(#1)}
\newcommand{\Pa}[1]{\left(#1\right)}
\newcommand{\Br}[1]{\left[#1\right]}
\newcommand{\set}[1]{\{#1\}}
\newcommand{\Set}[1]{\left\{#1\right\}}
\newcommand{\ket}[1]{|#1\rangle}
\DeclareMathOperator{\trace}{Tr}
\newcommand{\ptr}[2]{\trace_{#1}\pa{#2}}
\newcommand{\Ptr}[2]{\trace_{#1}\Pa{#2}}
\newcommand{\Tr}[1]{\Ptr{}{#1}}
\def\cH{\mathcal{H}}
\def\cM{\mathcal{M}}
\def\rF{\mathrm{F}}\def\rH{\mathrm{H}}
\def\rS{\mathrm{S}}
\def\sD{\mathscr{D}}
\def\sM{\mathscr{M}}
\newtheorem{thrm}{Theorem}[section]
\newtheorem{prop}[thrm]{Proposition}
\newtheorem{cor}[thrm]{Corollary}
\theoremstyle{definition}
\newtheorem{remark}[thrm]{Remark}
\numberwithin{equation}{section}
\newcounter{questionnumber}
\begin{document}

\title{\Large A lower bound of quantum conditional mutual information}

\author{Lin Zhang$^1$\footnote{E-mail: godyalin@163.com;
linyz@zju.edu.cn}\ , Junde Wu$^2$\\
  {\small $^1$\it Institute of Mathematics, Hangzhou Dianzi University, Hangzhou 310018, PR~China}\\
  {\small $^2$\it Department of Mathematics, Zhejiang University, Hangzhou 310027, PR~China}
   }
\date{}
\maketitle
\maketitle \mbox{}\hrule\mbox\\
\begin{abstract}

In this paper, a lower bound of quantum conditional mutual
information is obtained by employing the Peierls-Bogoliubov
inequality and Golden Thompson inequality. Comparison with the
bounds obtained by other researchers indicates that our result is
independent of any measurements. It may give some new insights over
squashed entanglement and perturbations of Markov chain states.

\end{abstract}
\maketitle \mbox{}\hrule\mbox\\

\section{Introduction}

Let $\cH$ be a finite dimensional complex Hilbert space. A
\emph{quantum state} $\rho$ on $\cH$ is a positive semi-definite
operator of trace one, in particular, for each unit vector
$\ket{\psi} \in \cH$, the operator $\rho = \out{\psi}{\psi}$ is said
to be a \emph{pure state}. The set of all quantum states on $\cH$ is
denoted by $\density{\cH}$. For each quantum state
$\rho\in\density{\cH}$, its von Neumann entropy is defined by
$$
\rS(\rho) := - \Tr{\rho\log\rho}.
$$
The \emph{relative entropy} of two mixed states $\rho$ and $\sigma$
is defined by
$$
\rS(\rho||\sigma) := \left\{\begin{array}{ll}
                             \Tr{\rho(\log\rho -
\log\sigma)}, & \text{if}\ \supp(\rho) \subseteq
\supp(\sigma), \\
                             +\infty, & \text{otherwise}.
                           \end{array}
\right.
$$
A \emph{quantum channel} $\Phi$ on $\cH$ is a trace-preserving
completely positive linear map defined over the set $\density{\cH}$.
It follows that there exists linear operators $\set{K_\mu}_\mu$ on
$\cH$ such that $\sum_\mu K^\dagger_\mu K_\mu = \I$ and $\Phi =
\sum_\mu \mathrm{Ad}_{K_\mu}$, that is, for each quantum state
$\rho$, we have the Kraus representation
\begin{eqnarray*}
\Phi(\rho) = \sum_\mu K_\mu \rho K^\dagger_\mu.
\end{eqnarray*}

The celebrated strong subadditivity (SSA) inequality of quantum
entropy, proved by Lieb and Ruskai in \cite{Lieb1973},
\begin{eqnarray}\label{eq:SSA-1} \rS(\rho_{ABC}) +
\rS(\rho_B) \leqslant \rS(\rho_{AB}) + \rS(\rho_{BC}),
\end{eqnarray}
is a very powerful tool in quantum information theory. Recently, the
operator extension of SSA is obtained by Kim in \cite{Kim2012}.
Following the line of Kim, Ruskai gives a family of new operator
inequalities in \cite{Ruskai2012}.

Conditional mutual information, measuring the correlations of two
quantum systems relative to a third, is defined as follows: Given a
tripartite state $\rho_{ABC}$, it is defined by
\begin{eqnarray}
I(A:C|B)_\rho := \rS(\rho_{AB})+ \rS(\rho_{BC}) - \rS(\rho_{ABC}) -
\rS(\rho_B).
\end{eqnarray}
Clearly conditional mutual information is nonnegative by SSA.

Ruskai is the first one to discuss the equality condition of SSA, that is, $I(A:C|B)_\rho = 0$. By
analyzing the equality condition of Golden-Thompson inequality, she
obtained the following characterization \cite{Ruskai2002}:
\begin{eqnarray}
I(A:C|B)_\rho = 0 \Longleftrightarrow \log\rho_{ABC} + \log\rho_B =
\log\rho_{AB} + \log\rho_{BC}.
\end{eqnarray}
Throughout the present paper, we have suppressed implicit tensor products with the identity by conventions. For example, $\log\rho_{AB}$ means $(\log\rho_{AB})\ot\I_C$.

Later on, using the relative modular approach established by Araki,
Petz gave another characterization of the equality condition of SSA
\cite{Petz2003}:
\begin{eqnarray}
I(A:C|B)_\rho = 0 \Longleftrightarrow
\rho^{\mathrm{i}t}_{ABC}\rho^{-\mathrm{i}t}_{BC} =
\rho^{\mathrm{i}t}_{AB} \rho^{-\mathrm{i}t}_B\quad(\forall
t\in\real),
\end{eqnarray}
where $\mathrm{i} = \sqrt{-1}$ is the imaginary unit.

Hayden \emph{et al.} in \cite{Hayden2004} showed that $I(A:C|B)_\rho
=0$ if and only if the following conditions hold:
\begin{enumerate}[(i)]
\item $\cH_B = \bigoplus_k \cH_{b^L_k} \ot \cH_{b^R_k}$,
\item $\rho_{ABC} = \bigoplus_k  p_k \rho_{Ab^L_k} \ot \rho_{b^R_kC}$, where $\rho_{Ab^L_k}\in\density{\cH_A \ot \cH_{b^L_k}}, \rho_{b^R_kC} \in \density{\cH_{b^R_k} \ot\cH_C}$
for each index $k$; and $\set{p_k}$ is a probability distribution.
\end{enumerate}

In order to avoid computations already known to be difficult, such
as those of logarithmic and complex exponential powers of states,
Zhang \cite{Zhang2013} gave a new characterization of $I(A:C|B)_\rho
= 0$. To be specific, define
\begin{eqnarray*}
M &\defeq&
(\rho^{1/2}_{AB}\ot\I_C)(\I_A\ot\rho^{-1/2}_B\ot\I_C)(\I_A\ot\rho^{1/2}_{BC})\equiv
\rho^{1/2}_{AB}\rho^{-1/2}_B\rho^{1/2}_{BC}.
\end{eqnarray*}
Then the following conditions are equivalent:
\begin{enumerate}[(i)]
\item $I(A:C|B)_\rho = 0$;
\item $\rho_{ABC} = MM^\dagger = \rho^{1/2}_{AB}\rho^{-1/2}_B\rho_{BC}\rho^{-1/2}_B \rho^{1/2}_{AB}$;
\item $\rho_{ABC} = M^\dagger M = \rho^{1/2}_{BC}\rho^{-1/2}_B\rho_{AB}\rho^{-1/2}_B \rho^{1/2}_{BC}$.
\end{enumerate}

In \cite{Brandao2011}, Brand\~{a}o \emph{et al.} first obtained the
following lower bound for $I(A:C|B)_\rho$:
\begin{eqnarray}\label{eq:fernando}
I(A:C|B)_\rho \geqslant \frac18
\min_{\sigma_{AC}\in\mathbb{SEP}}\norm{\rho_{AC} -
\sigma_{AC}}^2_{1\text{-}\mathbb{LOCC}},
\end{eqnarray}
where $\mathbb{SEP}$ denotes the set of all separable states on the composite system of $A$ and $C$, and
$$
\norm{\rho_{AC} - \sigma_{AC}}_{1\text{-}\mathbb{LOCC}} \defeq
\sup_{\cM\in1\text{-}\mathbb{LOCC}}\norm{\cM(\rho_{AC}) -
\cM(\sigma_{AC})}_1.
$$
The $\mathbb{LOCC}$ means local operation and classical communication. $1\text{-}\mathbb{LOCC}$ means that the set of one-way $\mathbb{LOCC}$ measurements. Note that the inequality \eqref{eq:fernando} holds only for $1\text{-}\mathbb{LOCC}$ norm, i.e. \emph{one-way} LOCC norm instead of $\mathbb{LOCC}$ norm since there exists a counterexample that violates LOCC-norm. The specific explanation about this can be found in [{\em Erratum to: Faithful Squashed Entanglement},
\href{http://dx.doi.org/10.1007/s00220-012-1584-y}{\textbf{316},
287-288 (2012).}]

Based on this result, they cracked a \emph{long-standing} open
problem in quantum information theory. That is, the squashed
entanglement is \emph{faithful}. Later, Li and Winter in
\cite{Li2014} gave another approach to study the same problem and
improved the lower bound for $I(A:C|B)_\rho$:
\begin{eqnarray}\label{eq:kli}
I(A:C|B)_\rho \geqslant \frac12
\min_{\sigma_{AC}\in\mathbb{SEP}}\norm{\rho_{AC} -
\sigma_{AC}}^2_{1\text{-}\mathbb{LOCC}}.
\end{eqnarray}

Along with the above line, Ibinson \emph{et al.} in
\cite{Ibinson2008} studied the robustness of quantum Markov chains,
they employed the following famous
characterization of saturation of monotonicity inequality of
relative entropy, that is, let $\rho,\sigma\in\density{\cH}$, $\Phi$ be a
quantum channel defined over $\cH$. If
$\supp(\rho)\subseteq\supp(\sigma)$, then
\begin{eqnarray}
\rS(\rho||\sigma) = \rS(\Phi(\rho)||\Phi(\sigma))\quad\text{if and
only if}\quad \Phi^*_\sigma\circ\Phi(\rho) = \rho,
\end{eqnarray}
where $\Phi^*_\sigma =
\mathrm{Ad}_{\sigma^{1/2}}\circ\Phi^*\circ\mathrm{Ad}_{\Phi(\sigma)^{-1/2}}$, and $\Phi^*$ is the dual of $\Phi$ with respect to Hilbert-Schmidt inner product over operator space on $\cH$, i.e. $\Tr{\Phi^*(X)Y} = \Tr{X\Phi(Y)}$ for all operators $X,Y$ on $\cH$
\cite{Petz1988,Hiai2011}.

In order to establish our results, the following three inequality is useful:

\begin{prop}[Peierls-Bogoliubov Inequality,
\cite{Bebiano}]\label{prop:PB}
For two Hermitian matrices $H$ and $K$, it holds that
\begin{eqnarray}
\frac{\Tr{e^{H+K}}}{\Tr{e^{H}}}\geqslant
\exp\Br{\frac{\Tr{e^HK}}{\Tr{e^{H}}}}.
\end{eqnarray}
The equality occurs in the Peierls-Bogoliubov inequality if and only
if $K$ is a scalar matrix.
\end{prop}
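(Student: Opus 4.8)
The plan is to read the inequality as the statement that the quantum relative entropy between two Gibbs states is nonnegative, which simultaneously yields the bound and its sharp equality case without ever differentiating a matrix exponential. Write $\rho_0 \defeq e^H/\Tr{e^H}$ and $\tau \defeq e^{H+K}/\Tr{e^{H+K}}$ for the Gibbs states associated with $H$ and $H+K$. Since $H$ and $H+K$ are Hermitian, $e^H$ and $e^{H+K}$ are positive definite, so $\rho_0$ and $\tau$ are full-rank density operators and $\rS(\rho_0||\tau)$ is finite.

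First I would compute the two logarithms explicitly, $\log\rho_0 = H - (\log\Tr{e^H})\I$ and $\log\tau = (H+K) - (\log\Tr{e^{H+K}})\I$, so that the potentially troublesome $H$-terms cancel in the difference, leaving $\log\rho_0 - \log\tau = -K + \log(\Tr{e^{H+K}}/\Tr{e^H})\,\I$. Substituting into $\rS(\rho_0||\tau) = \Tr{\rho_0(\log\rho_0 - \log\tau)}$ and using $\Tr{\rho_0}=1$ together with cyclicity of the trace to rewrite $\Tr{\rho_0 K} = \Tr{e^H K}/\Tr{e^H}$, the relative entropy becomes exactly $\log(\Tr{e^{H+K}}/\Tr{e^H}) - \Tr{e^H K}/\Tr{e^H}$. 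Nonnegativity of relative entropy (Klein's inequality) then gives $\log(\Tr{e^{H+K}}/\Tr{e^H}) \geqslant \Tr{e^H K}/\Tr{e^H}$, and exponentiating both sides is the claimed inequality.

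For the equality case I would invoke the strict form of Klein's inequality: $\rS(\rho_0||\tau)=0$ holds if and only if $\rho_0 = \tau$, i.e. $e^{H+K} = \lambda\, e^H$ with $\lambda = \Tr{e^{H+K}}/\Tr{e^H} > 0$. Taking the matrix logarithm of this identity between positive-definite operators yields $H+K = H + (\log\lambda)\I$, hence $K = (\log\lambda)\I$ is a scalar matrix; the converse is immediate, since $K = c\,\I$ makes both sides of the inequality equal to $e^c$.

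The main obstacle is really packaged into the single black-box fact I am relying on, namely nonnegativity of relative entropy with its equality condition; everything else is bookkeeping. An alternative, more self-contained route is to set $f(t) \defeq \log\Tr{e^{H+tK}}$, show via Duhamel's formula and cyclicity that $f'(0) = \Tr{e^H K}/\Tr{e^H}$, and prove $f$ convex (either by a direct, if tedious, computation of the nonnegative Bogoliubov variance $f''(t)$, or by exhibiting $f(t) = \max_\rho[\Tr{\rho(H+tK)} + \rS(\rho)]$ as a supremum of affine functions of $t$); the inequality is then the supporting-line estimate $f(1) \geqslant f(0) + f'(0)$. I expect the genuinely delicate step in that second route to be justifying the differentiation and convexity in the noncommutative setting, which is exactly what the relative-entropy argument lets me avoid.
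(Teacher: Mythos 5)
Your proof is correct. One thing to note at the outset: the paper does not prove this proposition at all --- it imports it as a known tool, with a citation to Bebiano, Lemos and da Provid\^{e}ncia, so there is no in-paper argument to compare against. Your derivation is the standard ``Gibbs variational'' route: identifying the deficit in the Peierls--Bogoliubov inequality as the relative entropy
\begin{equation*}
\rS(\rho_0\|\tau) \;=\; \log\frac{\Tr{e^{H+K}}}{\Tr{e^{H}}} \;-\; \frac{\Tr{e^{H}K}}{\Tr{e^{H}}},
\qquad \rho_0 = \frac{e^{H}}{\Tr{e^{H}}},\quad \tau = \frac{e^{H+K}}{\Tr{e^{H+K}}},
\end{equation*}
is exactly right, the algebra with the logarithms checks out (the additive constants $\log\Tr{e^{H}}$ and $\log\Tr{e^{H+K}}$ are scalars, so the matrix logarithms split as you claim), and Klein's inequality with its faithfulness clause ($\rS(\rho\|\sigma)=0$ iff $\rho=\sigma$) delivers both the inequality and the equality case in one stroke. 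The passage from $e^{H+K}=\lambda e^{H}$ to $K=(\log\lambda)\I$ is legitimate because $\lambda e^{H}=e^{H+(\log\lambda)\I}$ and $\exp$ is injective on Hermitian matrices. There is no circularity: Klein's inequality is proved by eigenbasis expansion and concavity of the scalar logarithm, independently of Peierls--Bogoliubov. What your approach buys, compared with the differentiation/convexity route you sketch as an alternative (and with the proof in the cited reference, which works through log-convexity of $t\mapsto\Tr{e^{H+tK}}$), is that the equality condition comes for free from faithfulness of relative entropy, rather than from a separate analysis of when the Bogoliubov variance $f''(t)$ vanishes; the price is that you must take the strict form of Klein's inequality as a black box, which is a standard and acceptable dependency here.
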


\begin{prop}[Golden-Thompson Inequality,
\cite{Forrester}]\label{prop:GT}
For arbitrary Hermitian matrices
$A$ and $B$, one has
\begin{eqnarray}
\Tr{e^{A+B}}\leqslant \Tr{e^Ae^B}.
\end{eqnarray}
Moreover $\Tr{e^{A+B}} = \Tr{e^Ae^B}$ if and only if $[A,B]=0$, i.e.
$AB=BA$.
\end{prop}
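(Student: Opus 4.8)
The plan is to realize $\Tr{e^{A+B}}$ as a limit of traces of products via the Lie--Trotter product formula, and then to control each term by $\Tr{e^A e^B}$ through a monotone trace inequality for positive matrices. Concretely, the Lie--Trotter formula gives $e^{A+B}=\lim_{n\to\infty}(e^{A/2^n}e^{B/2^n})^{2^n}$, so by continuity of the trace
\[
\Tr{e^{A+B}}=\lim_{n\to\infty}c_n,\qquad c_n:=\Tr{\Pa{e^{A/2^n}e^{B/2^n}}^{2^n}}.
\]
Since $c_0=\Tr{e^Ae^B}$, it suffices to prove that the sequence $(c_n)_{n\ge 0}$ is non-increasing.

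The monotonicity rests on the trace inequality that for positive semidefinite $S,T$ and every positive integer $m$ one has $\Tr{(ST)^{2m}}\le\Tr{(S^2T^2)^m}$; applying it with $S=e^{A/2^{n+1}}$, $T=e^{B/2^{n+1}}$, $m=2^n$ gives $c_{n+1}\le c_n$, whence $\Tr{e^{A+B}}=\lim c_n\le c_0=\Tr{e^Ae^B}$. For the base case $m=1$ I would set $Z=ST$ and note that $\Tr{(ST)^2}=\Tr{Z^2}$ is real (indeed it equals $\Tr{(S^{1/2}TS^{1/2})^2}\ge 0$); then
\[
\Tr{Z^2}\le\sum_i|\lambda_i(Z)|^2\le\Fnorm{Z}^2=\Tr{ZZ^\dagger},
\]
where the first step uses that $\Tr{Z^2}$ is real and the second is Schur's inequality, with equality if and only if $Z$ is normal. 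Since $Z^\dagger=TS$, this reads $\Tr{(ST)^2}\le\Tr{STTS}=\Tr{S^2T^2}$. The general $m$ is the finite-dimensional Araki--Lieb--Thirring inequality, which I would obtain by the same eigenvalue--singular-value comparison applied to $Z=(ST)^m$, combined with an induction on the exponent.

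For the equality condition, the direction $[A,B]=0\Rightarrow\Tr{e^{A+B}}=\Tr{e^Ae^B}$ is immediate, since commuting Hermitian matrices are simultaneously diagonalizable and then $e^{A+B}=e^Ae^B$. For the converse, suppose $\Tr{e^{A+B}}=\Tr{e^Ae^B}$. Because $(c_n)$ is non-increasing with limit $\Tr{e^{A+B}}=c_0$, every inequality in the chain is an equality; in particular $c_1=c_0$, i.e.\ equality holds in the base inequality with $S=e^{A/2}$, $T=e^{B/2}$. By the equality case of Schur's inequality this forces $Z=e^{A/2}e^{B/2}$ to be normal. Since $Z=S^{1/2}(S^{1/2}TS^{1/2})S^{-1/2}$ is similar to a positive definite matrix, its spectrum is positive, and a normal matrix with positive spectrum is Hermitian positive definite; hence $Z=Z^\dagger$, that is $e^{A/2}e^{B/2}=e^{B/2}e^{A/2}$. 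Passing through the functional calculus (each factor is positive definite) yields $[A,B]=0$.

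The main obstacle is the general-$m$ trace inequality $\Tr{(ST)^{2m}}\le\Tr{(S^2T^2)^m}$: in contrast to $m=1$, a single application of the eigenvalue--singular-value bound only gives $\Tr{(ST)^{2m}}\le\Tr{(ST)^m(TS)^m}$, and closing the remaining gap to $\Tr{(S^2T^2)^m}$ genuinely requires the Araki--Lieb--Thirring machinery (for instance an antisymmetric-tensor-power or majorization argument). A secondary delicate point is the precise equality characterization of Schur's inequality, namely that saturation forces $Z$ to be normal, since this is exactly what lets the equality analysis pin down $[A,B]=0$.
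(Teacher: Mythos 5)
Your proposal cannot be compared against an internal argument, because the paper never proves this proposition: it is imported as a known tool, stated with a citation to the Forrester--Thompson survey (and the Wasin-So papers it cites elsewhere contain the equality analysis). So the evaluation has to be on the merits of your argument alone. On those merits, the skeleton is sound and classical: the Lie--Trotter reduction to monotonicity of $c_n=\Tr{(e^{A/2^n}e^{B/2^n})^{2^n}}$, the $m=1$ base case via the Schur bound $\sum_i|\lambda_i(Z)|^2\le\Fnorm{Z}^2$, and the equality analysis are all correct. In particular the equality direction is handled well: equality forces $c_1=c_0$, hence $Z=e^{A/2}e^{B/2}$ is normal by the equality case of Schur's inequality; since $Z$ is similar to the positive definite matrix $e^{A/4}e^{B/2}e^{A/4}$ its spectrum is positive, so normality makes $Z$ Hermitian, giving $e^{A/2}e^{B/2}=e^{B/2}e^{A/2}$ and then $[A,B]=0$ by functional calculus.

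The one genuine gap is the one you flag yourself: the inequality $\Tr{(ST)^{2m}}\le\Tr{(S^2T^2)^m}$ for general $m$, which you defer to Araki--Lieb--Thirring. Your diagnosis of the dead end is accurate, but the fix is cleaner than you suggest. Rather than applying the $p=2$ eigenvalue--singular-value bound to $Z=(ST)^m$ (which only yields $\Tr{(ST)^m(TS)^m}$), apply Weyl's majorant theorem $\sum_i|\lambda_i(Z)|^p\le\sum_i\sigma_i(Z)^p$ with $p=2m$ directly to $Z=ST$: the eigenvalues of $ST$ are positive, and $\sigma_i(ST)^2=\lambda_i(ST^2S)$, so
\begin{equation*}
\Tr{(ST)^{2m}}=\sum_i\lambda_i(ST)^{2m}\le\sum_i\sigma_i(ST)^{2m}=\Tr{\Pa{ST^2S}^m}=\Tr{\Pa{S^2T^2}^m},
\end{equation*}
the last equality because $ST^2S$ is similar to $T^2S^2$. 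This closes the step in one stroke; however, Weyl's theorem for $p>2$ itself rests on the antisymmetric-tensor-power/log-majorization argument, so your judgment that genuine machinery is unavoidable stands. Since that machinery is standard and citable, your proof is complete modulo one explicitly acknowledged external lemma --- which puts it on the same footing as the paper's own practice of citing the whole proposition without proof.
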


\begin{prop}[Wasin-So Identity, \cite{So1,So2}]\label{prop:Wasin-so}
Let $A, B$ be two $n\times n$ Hermitian matrices. Then there exist
two $n\times n$ unitary matrices $U$ and $V$ such that
\begin{eqnarray}
\exp\Pa{\frac A2}\exp(B)\exp\Pa{\frac A2} = \exp\Pa{UAU^\dagger +
VBV^\dagger}.
\end{eqnarray}
\end{prop}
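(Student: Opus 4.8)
The plan is to convert the identity into a statement about eigenvalues and then invoke the solution of Horn's problem. First I would observe that $e^{A/2}\exp(B)e^{A/2}$ is positive definite: it has the form $X^\dagger Y X$ with $X=e^{A/2}$ an invertible Hermitian matrix and $Y=\exp(B)$ positive definite, so its Hermitian logarithm $C:=\log\Pa{e^{A/2}\exp(B)e^{A/2}}$ is a well-defined Hermitian matrix, and the claim is exactly that $C=UAU^\dagger+VBV^\dagger$ for suitable unitaries $U,V$. I would then reduce this to an \emph{isospectral} statement: since two Hermitian matrices with the same spectrum are unitarily similar, it suffices to produce unitaries $U_0,V_0$ with $\lambda(C)=\lambda\Pa{U_0AU_0^\dagger+V_0BV_0^\dagger}$; a single extra unitary $W$ conjugating that sum onto $C$ can then be absorbed by replacing $U_0,V_0$ with $WU_0,WV_0$. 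Thus everything comes down to showing that the eigenvalue vector of $C$ is \emph{achievable} as the spectrum of a sum $A'+B'$ with $A'$ unitarily equivalent to $A$ and $B'$ to $B$.

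For the achievability I would appeal to the resolution of Horn's conjecture (Klyachko; Knutson--Tao): a Hermitian matrix $X$ equals $UAU^\dagger+VBV^\dagger$ for some unitaries if and only if the triple $\Pa{\lambda(A),\lambda(B),\lambda(X)}$ lies in the convex polytope cut out by the Horn inequalities. The task therefore becomes the verification that $\lambda(C)$ satisfies these inequalities. The single linear \emph{equality} among them is immediate: since $\det\Pa{e^{A/2}\exp(B)e^{A/2}}=e^{\trace(A)}e^{\trace(B)}$ while $\det\exp(C)=e^{\trace(C)}$, we get $\trace(C)=\trace(A)+\trace(B)$, matching the normalization forced by any sum $A'+B'$.

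The real content is that the \emph{logarithmic spectrum} of the product obeys the full family of Horn inequalities. Here I would invoke the equivalence between the additive and the multiplicative eigenvalue problems: the logarithms of the eigenvalues of a product $e^{A/2}\exp(B)e^{A/2}$ (equivalently of $e^{A}\exp(B)$, to which it is similar) are governed by exactly the same Horn inequalities as the eigenvalues of a sum, with $\lambda(A),\lambda(B)$ playing the roles of the summand spectra. Because $C$ is by construction a logarithm of such a product, its spectrum lands in the Horn polytope, and the decomposition follows. As partial evidence for this step one already has the ``majorization'' subfamily: the Araki--Lieb--Thirring refinement of the Golden--Thompson inequality (Proposition~\ref{prop:GT}) yields the log-majorization $\lambda^{\downarrow}(C)\prec\lambda^{\downarrow}(A)+\lambda^{\downarrow}(B)$, which supplies the Ky~Fan/trace-type Horn inequalities.

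The main obstacle is precisely this last step. Majorization alone is strictly weaker than the full Horn system for $n\geqslant 3$ (it misses the Weyl-type bounds $\lambda_{i+j-1}(C)\leqslant\lambda_i(A)+\lambda_j(B)$ and their higher analogues), so closing the gap genuinely requires the additive--multiplicative correspondence rather than any single elementary exponential inequality. An alternative, more hands-on route would follow Thompson's original strategy: track the curve $t\mapsto\tfrac{1}{t}\log\Pa{e^{tA/2}\exp(tB)e^{tA/2}}$, which tends to $A+B$ as $t\to0$ by the Baker--Campbell--Hausdorff expansion, and run a continuity argument on the set of achievable spectra out to $t=1$; but making this rigorous still rests on the convexity of the achievable-spectrum set, that is, on the same deep input.
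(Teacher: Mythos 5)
The paper does not actually prove this proposition: it is imported by citation from Wasin So's papers [So1, So2], so there is no internal proof to compare against. Your sketch is, in substance, the argument of the cited source (So's 2004 ``high road'' paper): reduce the exponential identity to an isospectrality statement, then invoke the resolution of the additive Horn problem (Klyachko, Knutson--Tao saturation) together with the additive--multiplicative correspondence (Klyachko's theorem, resolving Thompson's conjecture for $GL_n$), which places the log-spectrum of $e^{A/2}e^{B}e^{A/2}$ in the Horn polytope of $\bigl(\lambda(A),\lambda(B)\bigr)$. Your elementary reductions are all correct: positive definiteness of $e^{A/2}e^{B}e^{A/2}$, absorbing the conjugating unitary $W$ into $U_0,V_0$, and the trace normalization $\trace(C)=\trace(A)+\trace(B)$ via determinants; and you correctly identify that the entire weight of the proof rests on the multiplicative Horn problem, since majorization-type consequences of Golden--Thompson are strictly weaker than the full Horn system for $n\geqslant 3$. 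One attribution quibble: the majorization $\lambda^{\downarrow}(C)\prec\lambda^{\downarrow}(A)+\lambda^{\downarrow}(B)$ follows from Weyl's majorant theorem combined with Horn's singular-value inequalities for products, not from Araki--Lieb--Thirring (which compares $\lambda(C)$ with $\lambda(A+B)$, in the opposite direction); this does not affect the structure or validity of your argument.
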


In this paper, from the observations made by Carlen and Lieb in \cite{Carlen}, a lower bound of quantum conditional mutual
information $I(A:C|B)_\rho$ is obtained by employing Peierls-Bobogliubov inequality
and Golden Thompson inequality in section 3. Comparison with the bounds obtained by
Brand\~{a}o \emph{et al}, and Li and Winter, respectively, indicates that our result is independent of any measurements. This result maybe gives some new insights over \emph{squashed entanglement} and perturbations of
\emph{Markov chain states}.

\section{Main results}
\begin{thrm}\label{th:newbound}
For a tripartite state $\rho_{ABC}$, we have
\begin{eqnarray}\label{eq:lower-bound}
I(A:C|B)_\rho\geqslant \norm{\sqrt{\rho_{ABC}} -
\sqrt{\exp(\log\rho_{AB} - \log\rho_B+\log\rho_{BC})}}^2_2.
\end{eqnarray}
In particular, $I(A:C|B)_\rho = 0$ if and only if
\begin{eqnarray*}
\log\rho_{ABC} + \log\rho_B = \log\rho_{AB} + \log\rho_{BC}.
\end{eqnarray*}
\end{thrm}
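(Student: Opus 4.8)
The plan is to rewrite $I(A:C|B)_\rho$ as a single relative entropy and then bound it from below using only Propositions~\ref{prop:GT} and~\ref{prop:PB}; the square-root form in \eqref{eq:lower-bound} will come out after one elementary scalar estimate. Set $\sigma := \exp(\log\rho_{AB} - \log\rho_B + \log\rho_{BC})$, the positive operator under the second square root. Using the paper's convention on suppressed identity factors together with the partial-trace identities $\Tr{\rho_{AB}\log\rho_{AB}} = \Tr{\rho_{ABC}\log\rho_{AB}}$ (and the analogues for $\rho_{BC}$ and $\rho_B$), I would first rewrite the entropies so that
\[
I(A:C|B)_\rho = \Tr{\rho_{ABC}(\log\rho_{ABC} - \log\rho_{AB} + \log\rho_B - \log\rho_{BC})} = \Tr{\rho_{ABC}(\log\rho_{ABC} - \log\sigma)}.
\]
In other words $I(A:C|B)_\rho = \rS(\rho_{ABC}\|\sigma)$, an Umegaki relative entropy in which $\sigma$ need not have unit trace.

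The heart of the argument is the bound $\rS(\rho_{ABC}\|\sigma) \geq -2\log Q$, where $Q := \Tr{\rho_{ABC}^{1/2}\sigma^{1/2}}$, obtained by composing the two exponential inequalities. Writing $Q = \Tr{e^{\tfrac12\log\rho_{ABC}}e^{\tfrac12\log\sigma}}$, Proposition~\ref{prop:GT} gives $Q \geq \Tr{e^{\tfrac12(\log\rho_{ABC}+\log\sigma)}}$. I would then apply Proposition~\ref{prop:PB} with $H = \log\rho_{ABC}$ (so that $\Tr{e^H}=1$) and $K = \tfrac12(\log\sigma - \log\rho_{ABC})$, noting $H+K = \tfrac12(\log\rho_{ABC}+\log\sigma)$; this yields $\Tr{e^{H+K}} \geq \exp(\Tr{\rho_{ABC}K}) = \exp(-\tfrac12\rS(\rho_{ABC}\|\sigma))$. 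Chaining the two estimates gives $Q \geq \exp(-\tfrac12\rS(\rho_{ABC}\|\sigma))$, which is the claimed bound.

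To finish I would convert $-2\log Q$ into the Hilbert--Schmidt distance of the square roots. The elementary inequality $-\log x \geq 1-x$ gives $-2\log Q \geq 2(1-Q)$, and since $\Tr{\rho_{ABC}}=1$,
\[
2(1-Q) = \Tr{\rho_{ABC}} + \Tr{\rho_{ABC}} - 2Q \geq \Tr{\rho_{ABC}} + \Tr{\sigma} - 2Q = \norm{\sqrt{\rho_{ABC}} - \sqrt{\sigma}}_2^2,
\]
where the middle step uses $\Tr{\sigma}\leq 1$. This is precisely \eqref{eq:lower-bound}. The equality assertion is then immediate: $I(A:C|B)_\rho = 0$ forces the right-hand side to vanish, hence $\sqrt{\rho_{ABC}}=\sqrt{\sigma}$ and so $\log\rho_{ABC} = \log\rho_{AB}-\log\rho_B+\log\rho_{BC}$; conversely this operator identity gives $\rho_{ABC}=\sigma$ and thus $\rS(\rho_{ABC}\|\sigma)=0$.

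The step I expect to be the main obstacle is the trace bound $\Tr{\sigma} = \Tr{\exp(\log\rho_{AB} - \log\rho_B + \log\rho_{BC})} \leq 1$, which is essential: without it the right-hand side of \eqref{eq:lower-bound} is no longer dominated by $2(1-Q)$. This is a genuinely non-commutative fact about a three-term exponential, and it does not follow from Proposition~\ref{prop:GT} alone, since $\rho_{AB}$, $\rho_B$ and $\rho_{BC}$ need not commute and the exponent does not factor through the marginals. Establishing it is equivalent in strength to the strong subadditivity \eqref{eq:SSA-1} and ultimately rests on Lieb's concavity theorem; Proposition~\ref{prop:Wasin-so} is the natural tool for rewriting the three-term exponential as a single one along the way. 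A minor secondary point is to keep all logarithms on $\supp(\rho_{ABC})$ so that the Golden--Thompson and Peierls--Bogoliubov steps are legitimate, which is guaranteed once $\supp(\rho_{ABC})\subseteq\supp(\sigma)$ is checked.
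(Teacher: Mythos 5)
Your reduction is exactly the paper's argument: you chain the same two inequalities, with the same $H=\log\rho_{ABC}$ and (after rearrangement) the same $K$, to get $I(A:C|B)_\rho\geqslant -2\log Q$ with $Q=\Tr{\sqrt{\rho_{ABC}}\sqrt{\sigma}}$, and you convert $-2\log Q$ into the Hilbert--Schmidt bound via the expansion of $\norm{\sqrt{X}-\sqrt{Y}}_2^2$ together with the scalar inequality $-\log x\geqslant 1-x$, just as the paper does. That part is correct. However, the proposal has a genuine gap at precisely the step you flag and then leave open: the trace bound $\Tr{\sigma}=\Tr{\exp\Pa{\log\rho_{AB}-\log\rho_B+\log\rho_{BC}}}\leqslant 1$. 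Without it, the passage from $2(1-Q)$ to $\Tr{\rho_{ABC}}+\Tr{\sigma}-2Q$ fails, and the theorem is not proved. Acknowledging that the bound is essential and deep is a correct diagnosis, but it is not a proof, and your suggested route is off the mark: Proposition~\ref{prop:Wasin-so} produces unitaries $U,V,W$ over which one has no control, so by itself it cannot turn the three-term exponential into anything whose trace is computable.

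The paper closes this gap with Lieb's triple-matrix inequality \cite{EHLieb}: for positive definite $R,S,T$,
\begin{equation*}
\Tr{\exp\Pa{\log R - \log S + \log T}}\leqslant \Tr{\int^{+\infty}_0 R\,(S+x\I)^{-1}\,T\,(S+x\I)^{-1}\,dx}.
\end{equation*}
Taking $R=\rho_{AB}$, $S=\rho_B$, $T=\rho_{BC}$ and evaluating the right-hand side by partial traces --- tracing out $C$ replaces $\rho_{BC}$ by $\rho_B$, tracing out $A$ then replaces $\rho_{AB}$ by $\rho_B$, and the remaining integral
\begin{equation*}
\int^{+\infty}_0 \rho_B(\rho_B+x\I)^{-1}\rho_B(\rho_B+x\I)^{-1}\,dx = \rho_B
\end{equation*}
has trace $1$ --- gives exactly $\Tr{\sigma}\leqslant 1$. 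Your instinct that this fact is of the same depth as strong subadditivity is sound (Lieb's inequality is the engine behind the Lieb--Ruskai proof of \eqref{eq:SSA-1}), which is all the more reason it must be supplied explicitly, either by this computation or by citation, before your argument is complete. A secondary remark: your handling of the equality condition is fine, and slightly more explicit than the paper's in the converse direction.
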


\begin{proof}
Denote
$$
H=\log\rho_{ABC},~~K=\frac12\log\rho_{AB} + \frac12\log\rho_{BC}
-\frac12\log\rho_{ABC} -\frac12\log\rho_B.
$$
Thus $\Tr{e^H}=1$ and $H+K = \frac12\log\rho_{ABC} +
\frac12\log\rho_{AB} + \frac12\log\rho_{BC} -\frac12\log\rho_B$.
Since
$$
I(A:C|B)_\rho = \Tr{\rho_{ABC}(\log\rho_{ABC} + \log\rho_B -
\log\rho_{AB}-\log\rho_{BC})},
$$
it follows from Peierls-Bogoliubov inequality and Golden-Thompson
inequality that
\begin{eqnarray*}
&&\exp\Pa{-\frac12I(A:C|B)_\rho} = \exp\Pa{\Tr{e^HK}} = \exp\Pa{\frac{\Tr{e^HK}}{\Tr{e^H}}}\\
&&\leqslant \frac{\Tr{e^{H+K}}}{\Tr{e^H}} = \Tr{e^{H+K}}\\
&&= \Tr{\exp\Pa{\frac12\log\rho_{ABC} + \frac12\log\rho_{AB} +
\frac12\log\rho_{BC} -\frac12\log\rho_B}}\\
&&\leqslant
\Tr{\exp\Pa{\frac12\log\rho_{ABC}}\exp\Pa{\frac12\log\rho_{AB} +
\frac12\log\rho_{BC} -\frac12\log\rho_B}}\\
&&=\Tr{\sqrt{\rho_{ABC}}\sqrt{\exp\Pa{\log\rho_{AB} + \log\rho_{BC}
-\log\rho_B}}},
\end{eqnarray*}
which implies that
\begin{eqnarray}
I(A:C|B)_\rho \geqslant
-2\log\Tr{\sqrt{\rho_{ABC}}\sqrt{\exp\Pa{\log\rho_{AB} +
\log\rho_{BC} -\log\rho_B}}}.
\end{eqnarray}
It is known that for
any positive semi-definite matrices $X,Y$,
$$
\Tr{\sqrt{X}\sqrt{Y}} = \frac{\Tr{X}+\Tr{Y}-\Tr{(\sqrt{X} -
\sqrt{Y})^2}}2.
$$
From the above formula, we have
\begin{eqnarray*}
&&\Tr{\sqrt{\rho_{ABC}}\sqrt{\exp\Pa{\log\rho_{AB} + \log\rho_{BC}
-\log\rho_B}}} \\
&&= \frac{1+\Tr{\exp\Pa{\log\rho_{AB} + \log\rho_{BC}
-\log\rho_B}}}2 \\
&&~~~- \frac12\Tr{\Pa{\sqrt{\rho_{ABC}} -
\sqrt{\exp\Pa{\log\rho_{AB} + \log\rho_{BC} -\log\rho_B}}}^2}.
\end{eqnarray*}
For any positive definite matrices $R,S,T$, we have \cite{EHLieb}
\begin{eqnarray}
\Tr{\exp\Pa{\log R - \log S + \log T}}\leqslant \Tr{\int^{+\infty}_0
R(S+x\I)^{-1}T(S+x\I)^{-1}dx}.
\end{eqnarray}
Taking $R=\rho_{AB},S=\rho_B$, and $T = \rho_{BC}$ in the above inequality gives rise to
\begin{eqnarray*}
&&\ptr{ABC}{\exp\Pa{\log\rho_{AB} - \log\rho_B + \log\rho_{BC}}}\\
&&\leqslant \Ptr{ABC}{\int^{+\infty}_0
\rho_{AB}(\rho_B+x\I)^{-1}\rho_{BC}(\rho_B+x\I)^{-1}dx}\\
&&=\Ptr{AB}{\int^{+\infty}_0
\rho_{AB}(\rho_B+x\I)^{-1}\rho_B(\rho_B+x\I)^{-1}dx}\\
&&=\Ptr{B}{\int^{+\infty}_0
\rho_B(\rho_B+x\I)^{-1}\rho_B(\rho_B+x\I)^{-1}dx} = \ptr{B}{\rho_B} = 1.
\end{eqnarray*}
This fact indicates that $\Tr{\exp\Pa{\log\rho_{AB} + \log\rho_{BC}
-\log\rho_B}}\leqslant 1$. Hence
\begin{eqnarray*}
&&\Tr{\sqrt{\rho_{ABC}}\sqrt{\exp\Pa{\log\rho_{AB} + \log\rho_{BC}
-\log\rho_B}}} \\
&&\leqslant 1 - \frac12\norm{\sqrt{\rho_{ABC}} -
\sqrt{\exp\Pa{\log\rho_{AB} + \log\rho_{BC} -\log\rho_B}}}^2_2.
\end{eqnarray*}
Now since $-\log(1-t)\geqslant t$ for $t\leqslant 1$, it follows
that
\begin{eqnarray*}
I(A:C|B)_\rho &\geqslant&
-2\log\Tr{\sqrt{\rho_{ABC}}\sqrt{\exp\Pa{\log\rho_{AB} +
\log\rho_{BC} -\log\rho_B}}}\\
&\geqslant& -2\log\Pa{1 - \frac12\norm{\sqrt{\rho_{ABC}} -
\sqrt{\exp\Pa{\log\rho_{AB} + \log\rho_{BC} -\log\rho_B}}}^2_2}\\
&\geqslant& \norm{\sqrt{\rho_{ABC}} - \sqrt{\exp\Pa{\log\rho_{AB} +
\log\rho_{BC} -\log\rho_B}}}^2_2.
\end{eqnarray*}
Therefore the desired inequality is obtained.

Now if the conditional mutual information is vanished, then
$$
\norm{\sqrt{\rho_{ABC}} - \sqrt{\exp\Pa{\log\rho_{AB} +
\log\rho_{BC} -\log\rho_B}}}_2=0,
$$
that is, $\sqrt{\rho_{ABC}} = \sqrt{\exp\Pa{\log\rho_{AB} +
\log\rho_{BC} -\log\rho_B}}$, which is equivalent to the following:
$$
\rho_{ABC} = \exp\Pa{\log\rho_{AB} + \log\rho_{BC} -\log\rho_B}.
$$
By taking logarithm over both sides, it is seen that $\log\rho_{ABC}
= \log\rho_{AB} + \log\rho_{BC} -\log\rho_B$, a well-known equality
condition of strong subadditivity obtained by Ruskai in
\cite{Ruskai2002}. This completes the proof.
\end{proof}

\begin{cor}
It holds that
\begin{eqnarray}
I(A:C|B)_\rho \geqslant\frac14 \norm{\rho_{ABC} -
\exp\Pa{\log\rho_{AB} + \log\rho_{BC} -\log\rho_B}}^2_1.
\end{eqnarray}
\end{cor}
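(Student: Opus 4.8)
The plan is to deduce the corollary directly from Theorem~\ref{th:newbound}, by comparing the Hilbert--Schmidt norm on its right-hand side with the trace norm in the corollary. Abbreviate $X = \rho_{ABC}$ and $Y = \exp(\log\rho_{AB} + \log\rho_{BC} - \log\rho_B)$, and put $P = \sqrt{X}$, $Q = \sqrt{Y}$; both are positive semidefinite and $X - Y = P^2 - Q^2$. Since the theorem already furnishes $I(A:C|B)_\rho \geq \norm{P - Q}_2^2$, it suffices to prove the matrix-analytic inequality $\norm{P - Q}_2^2 \geq \tfrac14\norm{P^2 - Q^2}_1^2$.

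First I would symmetrize the difference of squares through the identity $P^2 - Q^2 = \tfrac12[(P+Q)(P-Q) + (P-Q)(P+Q)]$, which holds for arbitrary non-commuting matrices. Applying the triangle inequality for the trace norm and then the H\"older bound $\norm{MN}_1 \leq \norm{M}_2\norm{N}_2$ to each of the two summands gives
$$\norm{P^2 - Q^2}_1 \leq \norm{P + Q}_2\,\norm{P - Q}_2.$$

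The next step is to show $\norm{P + Q}_2 \leq 2$. Expanding $\norm{P + Q}_2^2 = \Tr{P^2} + 2\Tr{PQ} + \Tr{Q^2} = \Tr{X} + 2\Tr{\sqrt{X}\sqrt{Y}} + \Tr{Y}$, I would invoke three facts already available: $\Tr{X} = \Tr{\rho_{ABC}} = 1$; the estimate $\Tr{Y} \leq 1$ established within the proof of Theorem~\ref{th:newbound} via the Lieb integral inequality; and the Cauchy--Schwarz bound $\Tr{\sqrt{X}\sqrt{Y}} \leq \sqrt{\Tr{X}\,\Tr{Y}} \leq 1$. These combine to give $\norm{P + Q}_2^2 \leq 4$, hence $\norm{P + Q}_2 \leq 2$. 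Substituting into the previous display, squaring, and chaining with the theorem yields $I(A:C|B)_\rho \geq \norm{P - Q}_2^2 \geq \tfrac14\norm{X - Y}_1^2$, which is the assertion.

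The argument is essentially routine. The only point requiring genuine care is that every norm manipulation must respect the non-commutativity of $P$ and $Q$: the symmetrized identity for $P^2 - Q^2$ and the Schatten--H\"older inequality $\norm{MN}_1 \leq \norm{M}_2\norm{N}_2$ are precisely what deliver the constant $\tfrac14$, and no commutative shortcut is available. Beyond this bookkeeping I anticipate no substantive obstacle.
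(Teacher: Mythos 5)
Your proof is correct and follows essentially the same route as the paper's: the symmetrized identity $P^2-Q^2=\tfrac12\left[(P+Q)(P-Q)+(P-Q)(P+Q)\right]$, the triangle and H\"older inequalities giving $\norm{P^2-Q^2}_1\leqslant\norm{P+Q}_2\norm{P-Q}_2$, and the bound $\norm{P+Q}_2\leqslant 2$, chained with Theorem~\ref{th:newbound}. If anything, you are slightly more careful than the paper, which invokes $\norm{\sqrt{\rho}+\sqrt{\sigma}}_2\in[\sqrt{2},2]$ as if both operators were density matrices, whereas you correctly use only $\Tr{Y}\leqslant 1$ (established via Lieb's inequality in the theorem's proof) together with Cauchy--Schwarz to get the needed upper bound.
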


\begin{proof}
There is a well-known inequality in Matrix Analysis, i.e.
\emph{Audenaert's inequality} \cite{Audenaert}:
\begin{eqnarray}
\Tr{M^tN^{1-t}}\geqslant\frac12\Tr{M+N - \abs{M-N}}
\end{eqnarray}
for all $t\in[0,1]$ and positive matrices $M,N$, implying that for $t=\frac12$,
$$
\Tr{\sqrt{M}\sqrt{N}}\geqslant\frac12\Tr{M+N - \abs{M-N}}.
$$
Now that
$$
\Tr{\sqrt{M}\sqrt{N}} = \frac12\Tr{M+N - \Pa{\sqrt{M} - \sqrt{N}}^2}.
$$
Thus
\begin{eqnarray}
\norm{\sqrt{M} - \sqrt{N}}^2_2 \leqslant\norm{M-N}_1.
\end{eqnarray}
This is the famous
\emph{Powers-St\"{o}rmer's inequality} \cite{Powers}. Furthermore,
$$
\norm{M-N}_1 \leqslant\norm{\sqrt{M} - \sqrt{N}}_2 \norm{\sqrt{M} + \sqrt{N}}_2.
$$
Indeed, by triangular inequality and Schwartz inequality, it follows
that
\begin{eqnarray*}
\norm{M-N}_1 &=& \norm{\frac12\Pa{\sqrt{M} - \sqrt{N}}\Pa{\sqrt{M} + \sqrt{N}} + \frac12\Pa{\sqrt{M} + \sqrt{N}}\Pa{\sqrt{M} - \sqrt{N}}}_1\\
&\leqslant& \frac12\norm{\Pa{\sqrt{M} - \sqrt{N}}\Pa{\sqrt{M} + \sqrt{N}}}_1 + \frac12\norm{\Pa{\sqrt{M} + \sqrt{N}}\Pa{\sqrt{M} - \sqrt{N}}}_1\\
&\leqslant& \norm{\sqrt{M} - \sqrt{N}}_2 \norm{\sqrt{M} +
\sqrt{N}}_2.
\end{eqnarray*}
Therefore
\begin{eqnarray}
\norm{\sqrt{M} - \sqrt{N}}^2_2 \leqslant\norm{M-N}_1\leqslant
\norm{\sqrt{M} - \sqrt{N}}_2 \norm{\sqrt{M} + \sqrt{N}}_2.
\end{eqnarray}
It follows that
\begin{eqnarray}\label{eq:1-vs-2-norm}
\frac1{\norm{\sqrt{\rho}+\sqrt{\sigma}}^2_2}\norm{\rho -
\sigma}^2_1\leqslant \norm{\sqrt{\rho} - \sqrt{\sigma}}^2_2\leqslant
\norm{\rho - \sigma}_1.
\end{eqnarray}
In view of the fact that
$\norm{\sqrt{\rho}+\sqrt{\sigma}}_2\in[\sqrt{2},2]$, we have
$$
\frac14\leqslant\frac1{\norm{\sqrt{\rho}+\sqrt{\sigma}}^2_2}\leqslant\frac12.
$$
Applying Eq.~\eqref{eq:1-vs-2-norm} to
Eq.~\eqref{eq:lower-bound} in Theorem~\ref{th:newbound}, we get the
desired inequality.
\end{proof}
The following is the second one of main results:
\begin{thrm}
For two density matrices $\rho,\sigma\in\density{\cH}$ and a quantum
channel $\Phi$ over $\cH$, we have
\begin{eqnarray*}
&&\rS(\rho||\sigma) - \rS(\Phi(\rho)||\Phi(\sigma)) \\
&&\geqslant -2\log\Tr{\sqrt{\rho}\sqrt{\exp\Br{\log\sigma +
\Phi^*(\log\Phi(\rho)) - \Phi^*(\log\Phi(\sigma))}}},
\end{eqnarray*}
where $\Phi^*$ is a dual of $\Phi$ with respect to Hilbert-Schmidt
inner product over the operator space on $\cH$.
\end{thrm}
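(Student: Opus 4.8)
The plan is to run the proof of Theorem~\ref{th:newbound} a second time, with the logarithmic combination $\log\rho_{AB}+\log\rho_{BC}-\log\rho_B$ replaced by an operator built from the dual channel $\Phi^*$.

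First I would collapse the left-hand side into a single expectation against $\rho$. Expanding the two relative entropies and using the adjoint relation $\Tr{\Phi(X)Y}=\Tr{X\Phi^*(Y)}$ to move $\Phi$ off of $\log\Phi(\rho)$ and $\log\Phi(\sigma)$, one gets
\[
\rS(\rho||\sigma)-\rS(\Phi(\rho)||\Phi(\sigma))=\Tr{\rho\Pa{\log\rho-L}},
\]
where $L\defeq\log\sigma+\Phi^*(\log\Phi(\rho))-\Phi^*(\log\Phi(\sigma))$. Here I would record that $L$ is Hermitian: $\log\sigma$ is Hermitian, and since $\Phi$ is completely positive its adjoint $\Phi^*$ maps Hermitian operators to Hermitian operators, so $L$ is a legitimate argument for the matrix inequalities below.

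Next, exactly as before, set $H=\log\rho$ and $K=\frac12(L-\log\rho)$. Then $\Tr{e^H}=\Tr{\rho}=1$, the shifted exponent is $H+K=\frac12\log\rho+\frac12 L$, and the expectation above rearranges to $\Tr{e^H K}=-\frac12\Pa{\rS(\rho||\sigma)-\rS(\Phi(\rho)||\Phi(\sigma))}$. Feeding this into the Peierls-Bogoliubov inequality (Proposition~\ref{prop:PB}) and then the Golden-Thompson inequality (Proposition~\ref{prop:GT}) yields the chain
\begin{eqnarray*}
\exp\Pa{-\tfrac12\Pa{\rS(\rho||\sigma)-\rS(\Phi(\rho)||\Phi(\sigma))}}&=&\exp\Pa{\frac{\Tr{e^H K}}{\Tr{e^H}}}\\
&\leqslant&\Tr{e^{H+K}}\leqslant\Tr{\sqrt{\rho}\,\sqrt{e^L}},
\end{eqnarray*}
because $\exp\Pa{\tfrac12\log\rho}=\sqrt{\rho}$ and $\exp\Pa{\tfrac12 L}=\sqrt{e^L}$. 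Taking $-2\log$ of the two ends, which reverses the inequality, reproduces the claimed bound verbatim.

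The algebra is structurally identical to that of Theorem~\ref{th:newbound}, so I anticipate no real difficulty there; the single point demanding care is the domain of the four logarithms. I expect the main obstacle to be purely the support conditions: for $L$ and both relative entropies to be well-defined one needs $\Phi(\rho)$ and $\Phi(\sigma)$ to admit logarithms, which is automatic once all the states involved are positive definite but would otherwise require a separate support/continuity argument. As a sanity check I would verify that Theorem~\ref{th:newbound} is recovered by taking $\Phi$ to be the partial trace over $A$ and $\sigma=\rho_{AB}\ot\tfrac{\I_C}{d_C}$, for which $\Phi^*$ is the ampliation $X\mapsto\I_A\ot X$ and the auxiliary scalars $\log d_C$ cancel on both sides.
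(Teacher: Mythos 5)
Your proposal is correct and follows essentially the same route as the paper's own proof: reduce the difference of relative entropies to $\Tr{\rho(\log\rho - L)}$ via the adjoint relation, then apply Peierls--Bogoliubov (with $H=\log\rho$, $K=\tfrac12(L-\log\rho)$) followed by Golden--Thompson, and take logarithms. If anything, your write-up is slightly more careful than the paper's, which attributes both inequality steps to Golden--Thompson alone and does not mention the support conditions you flag.
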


\begin{proof}
Since
\begin{eqnarray*}
&&\rS(\Phi(\rho)||\Phi(\sigma)) - \rS(\rho||\sigma) \\
&&= \Tr{\rho\Br{-\log\rho + \log\sigma + \Phi^*(\log\Phi(\rho)) -
\Phi^*(\log\Phi(\sigma))}},
\end{eqnarray*}
it follows from Golden-Thompson inequality that
\begin{eqnarray*}
&&\exp\Pa{\frac12\rS(\Phi(\rho)||\Phi(\sigma)) -
\frac12\rS(\rho||\sigma)} \\
&&\leqslant \Tr{\exp\Br{\frac12\log\rho + \frac12\log\sigma +
\frac12\Phi^*(\log\Phi(\rho)) - \frac12\Phi^*(\log\Phi(\sigma))}}\\
&&\leqslant\Tr{\sqrt{\rho}\sqrt{\exp\Br{\log\sigma +
\Phi^*(\log\Phi(\rho)) - \Phi^*(\log\Phi(\sigma))}}},
\end{eqnarray*}
which implies the desired inequality.
\end{proof}

\section{Some remarks}

\begin{remark}
It is clear that Brand\~{a}o \emph{et al}'s bound \eqref{eq:fernando},
and Li and Winter's bound \eqref{eq:kli} are both LOCC
measurement-based. Moreover, they are independent of system $B$, in
view of this, they gave a lower bound of squashed entanglement,
defined by the following \cite{Brandao2011}:
\begin{eqnarray}
E_{sq}(\rho_{AC}) = \inf_{B}\Set{\frac12I(A:C|B)_\rho:
\ptr{B}{\rho_{ABC}} = \rho_{AC}}.
\end{eqnarray}
However, although our result depends on the system $B$, but, it shed
new light over squashed entanglement. More topics related with our
bound can be found in \cite{Leifer2008,Poulin2011,Kim2013}.
\end{remark}

It is asked in \cite{Zhang2013}: Can we derive
$I(A:C|B)_\rho = 0$ from $\Br{M,M^\dagger} = 0$ ? The answer is negative. Indeed, it follows from the discussion in \cite{Winter} that if the
operators $\rho_{AB},\rho_{BC}$ and $\rho_B$ are commute, then
\begin{eqnarray}\label{eq:Pinsker}
I(A:C|B)_\rho = \rS(\rho_{ABC}||MM^\dagger).
\end{eqnarray}
Now let $\rho_{ABC} = \sum_{i,j,k} p_{ijk}\out{ijk}{ijk}$ with
$\set{p_{ijk}}$ being an arbitrary joint probability distribution.
Thus
$$
MM^\dagger = M^\dagger M = \sum_{i,j,k}
\tfrac{p_{ij}p_{jk}}{p_j}\out{ijk}{ijk},
$$
where $p_{ij}=\sum_k p_{ijk},p_{jk}=\sum_ip_{ijk}$ and
$p_j=\sum_{i,k}p_{ijk}$ are corresponding marginal distributions,
respectively. In general,
$p_{ijk}\neq\tfrac{p_{ij}p_{jk}}{p_j}$. Therefore we have a specific
example in which $[M,M^\dagger]=0$, and $\rho_{ABC}\neq MM^\dagger$,
i.e. $I(A:C|B)_\rho>0$. By employing the Pinsker's inequality to
Eq.~\eqref{eq:Pinsker}, in this special case, it follows that
$$
I(A:C|B)_\rho \geqslant \frac12\norm{\rho_{ABC} - MM^\dagger}^2_1.
$$

Along with the above line, all tripartite states can be classified
into three categories:
$$
\density{\cH_A\ot\cH_B\ot\cH_C} = \sD_1\cup \sD_2 \cup \sD_3,
$$
where
\begin{enumerate}[(i)]
\item $\sD_1 \defeq \Set{\rho_{ABC}:\rho_{ABC} = MM^\dagger, [M,M^\dagger]=0}$.
\item $\sD_2 \defeq \Set{\rho_{ABC}:\rho_{ABC} \neq MM^\dagger, [M,M^\dagger]=0}$.
\item $\sD_3 \defeq \Set{\rho_{ABC}:[M,M^\dagger]\neq0}$.
\end{enumerate}
It is remarked here that for any tripartite state $\rho_{ABC}$, a
transformation can be defined as follows:
\begin{eqnarray}
\sM(\rho_{ABC}) :=
\rho^{1/2}_{AB}\rho^{-1/2}_B\rho_{BC}\rho^{-1/2}_B
\rho^{1/2}_{AB}~~\text{for}~~\forall
\rho_{ABC}\in\density{\cH_A\ot\cH_B\ot\cH_C}.
\end{eqnarray}
Apparently $\sM$ is a quantum channel since $\sM =
\Phi^*_\sigma\circ\Phi$ with $\Phi=\trace_A$ and
$\sigma=\rho_{AB}\ot\rho_C$. In general, the output state of $\sM$
is not a \emph{Markov chain state}, that is, the so-called state
with vanishing quantum conditional mutual information, unless
$\rho_{ABC}$ is a Markov chain state. Another analogous
transformation can be defined
\begin{eqnarray}
\sM'(\rho_{ABC}) :=
\rho^{1/2}_{BC}\rho^{-1/2}_B\rho_{AB}\rho^{-1/2}_B
\rho^{1/2}_{BC}~~\text{for}~~\forall
\rho_{ABC}\in\density{\cH_A\ot\cH_B\ot\cH_C}.
\end{eqnarray}
In \cite{Zhang2013}, it is \emph{conjectured} that
\begin{eqnarray}\label{conjecture}
I(A:C|B)_\rho \geqslant \frac12\max\Set{\norm{\rho_{ABC} -
\sM(\rho_{ABC})}^2_1, \norm{\rho_{ABC} - \sM'(\rho_{ABC})}^2_1}.
\end{eqnarray}
In fact, under the condition that
$$
\sM(\rho_{ABC}) = \exp(\log\rho_{AB}-\log\rho_B + \log\rho_{BC}),
$$
it is still seen that
$$
I(A:C|B)_\rho = \rS(\rho_{ABC}||\sM(\rho_{ABC})),
$$
implying (by Pinsker's inequality) that the conjectured inequality
is also true.

We can connect the total amount of conditional mutual information
contained in the tripartite state $\rho_{ABC}$ with the trace-norm
of the commutator $\Br{M,M^\dagger}$ as follows: if the above
conjecture holds, then we have
\begin{eqnarray}\label{eq:conj}
I(A:C|B)_\rho \geqslant \frac18\norm{\Br{M, M^\dagger}}^2_1,
\end{eqnarray}
but not vice versa. Even though the above conjecture is false, it is
still possible that this inequality is true. From the classification
of all tripartite states, it suffices to show Eq.~\eqref{eq:conj} is
true for states in $\sD_3$.

In fact, by using Wasin-So Identity several times, it follows that
\begin{eqnarray}
\sM(\rho_{ABC}) &=& \exp\Pa{U\log\rho_{AB}U^\dagger +
V\log\rho_{BC}V^\dagger -W\log\rho_BW^\dagger},\\
\sM'(\rho_{ABC}) &=& \exp\Pa{U'\log\rho_{AB}U'^\dagger +
V'\log\rho_{BC}V'^\dagger -W'\log\rho_BW'^\dagger}
\end{eqnarray}
for some triples of unitaries $(U,V,W)$ and $(U',V',W')$ over
$\cH_{ABC}$. The following conjecture is left \emph{open}: For all
triple of unitaries $(U,V,W)$ over $\cH_{ABC}$,
\begin{eqnarray}
I(A:C|B)_\rho \geqslant\frac14 \norm{\rho_{ABC} -
\exp\Pa{U\log\rho_{AB}U^\dagger + V\log\rho_{BC}V^\dagger
-W\log\rho_BW^\dagger}}^2_1.
\end{eqnarray}
Once this inequality is proved, a weaker one would be true:
\begin{eqnarray}
I(A:C|B)_\rho \geqslant\frac14 \max\Set{\norm{\rho_{ABC} -
\sM(\rho_{ABC})}^2_1, \norm{\rho_{ABC} - \sM'(\rho_{ABC})}^2_1}.
\end{eqnarray}

\begin{remark}
If one can show that
$$
\Tr{\exp\Pa{\log\sigma + \Phi^*(\log\Phi(\rho)) -
\Phi^*(\log\Phi(\sigma))}}\leqslant1,
$$
then it would be true that
$$
\rS(\rho||\sigma) - \rS(\Phi(\rho)||\Phi(\sigma))\geqslant
\frac14\norm{\rho - \exp\Pa{\log\sigma + \Phi^*(\log\Phi(\rho)) -
\Phi^*(\log\Phi(\sigma))}}^2_1.
$$
By similar reasoning in the previous part, it is believed that
\begin{eqnarray*}
\rS(\rho||\sigma) - \rS(\Phi(\rho)||\Phi(\sigma))\geqslant
\frac14\norm{\rho - \exp\Pa{U\log\sigma U^\dagger +
V\Phi^*(\log\Phi(\rho))V^\dagger -
W\Phi^*(\log\Phi(\sigma))W^\dagger}}^2_1.
\end{eqnarray*}
And
\begin{eqnarray*}
\rS(\rho||\sigma) - \rS(\Phi(\rho)||\Phi(\sigma))\geqslant
\frac14\norm{\rho - \Phi^*_\sigma\circ\Phi(\rho)}^2_1.
\end{eqnarray*}
The crack of this problem amounts to give a solution of Li and
Winter's question \cite{Winter} from a different perspective.
\end{remark}

\begin{remark} By using the Peierls-Bogoliubov inequality and Golden-Thompson inequality, we show the following
interesting inequality: We know from \cite{zhang2014} that there
exists a unitary $U$ (obtained by Golden-Thomspon inequality with
equality condition and Wasin-So Identity) such that
$\rF(\rho,\sigma) = \Tr{\exp\Pa{\log\sqrt{\rho} + U\log\sqrt{\sigma}
U^\dagger}}$. Now the Peierls-Bogoliubov inequality is used to give
a new lower bound for fidelity:
\begin{eqnarray*}
\rF(\rho,\sigma)&\geqslant&\Tr{\sqrt{\rho}}\exp\Pa{\frac{\Tr{\sqrt{\rho}U\log\sqrt{\sigma}U^\dagger}}{\Tr{\sqrt{\rho}}}}
\geqslant\Tr{\sqrt{\rho}}\exp\Pa{\frac{\Inner{\sqrt{\lambda^\downarrow(\rho)}}{\log\sqrt{\lambda^\uparrow(\sigma)}}}{\Tr{\sqrt{\rho}}}}\\
&\geqslant&\Tr{\sqrt{\rho}}\exp\Pa{\Inner{\sqrt{\lambda^\downarrow(\rho)}}{\log\sqrt{\lambda^\uparrow(\sigma)}}}
\geqslant\Tr{\sqrt{\rho}}\prod_{j=1}^n
\Pa{\lambda^\uparrow_j(\sigma)}^{\frac12\sqrt{\lambda^\downarrow_j(\rho)}}\\
&\geqslant&\Tr{\sqrt{\rho}}\sqrt{\prod_{j=1}^n\Pa{\lambda^\uparrow_j(\sigma)}^{\lambda^\downarrow_j(\rho)}}
=\Tr{\sqrt{\rho}}\exp\Pa{-\frac12\rS(\rho) -
\frac12\rH(\lambda^\downarrow(\rho)||\lambda^\uparrow(\sigma))}.
\end{eqnarray*}
for non-singular density matrices $\rho,\sigma$. Therefore, it is
obtained that for non-singular density matrices
$\rho,\sigma\in\density{\cH}$,
\begin{eqnarray}
\rF(\rho,\sigma)\geqslant\Tr{\sqrt{\rho}}\exp\Pa{-\frac12\rS(\rho) -
\frac12\rH(\lambda^\downarrow(\rho)||\lambda^\uparrow(\sigma))},
\end{eqnarray}
where $\lambda^\downarrow(\rho)$ (resp. $\lambda^\uparrow(\rho)$) is
the probability vector consisted of the eigenvalues of $\rho$,
listed in decreasing (resp. increasing) order; $\rH(\cdot||\cdot)$
is the relative entropy between probability distributions (related
notation can be referred to \cite{zhang2014}).
\end{remark}


\subsection*{Acknowledgements}
Fernando Brand\~{a}o, Shunlong Luo, and Mark Wilde is acknowledged
for their insightful comments on the present manuscript. LZ would
like to thank Patrick Hayden for his important remarks. The work is
supported by the National Natural Science Foundation of China (11301124,
11171301) and by the Doctoral Programs Foundation of Ministry of
Education of China (J20130061).



\end{document}